\documentclass{article}
\usepackage{fullpage}

\usepackage{amsmath}
\usepackage{amssymb}
\usepackage{stmaryrd}
\usepackage{graphicx}
\usepackage{hyperref}
\usepackage{color}

\usepackage{amsthm}
\usepackage{proof}

\usepackage{mathtools}
\usepackage{thmtools}

\usepackage{multicol}

\usepackage{caption}
\usepackage{subcaption}

\captionsetup[subfigure]{labelfont = {rm}}

\declaretheorem[name=Theorem,numberwithin=section]{theorem}

\declaretheorem[sibling=theorem]{corollary}

\declaretheorem[sibling=theorem]{proposition}

\declaretheorem[sibling=theorem]{definition}
\declaretheorem[sibling=theorem]{example}

\newcommand\definand[1]{{\bf #1}}

\newcommand\defeq{\stackrel{\text{\tiny def}}{=}}

\title{Counting isomorphism classes of $\beta$-normal linear lambda terms}
\author{Noam Zeilberger}

\begin{document}
\maketitle

\begin{abstract}
Unanticipated connections between different fragments of lambda calculus and different families of embedded graphs (a.k.a.~``maps'') motivate the problem of enumerating $\beta$-normal linear lambda terms.
In this brief note, it is shown (by appeal to a theorem of Arqu\`es and Beraud) that the sequence counting isomorphism classes of $\beta$-normal linear lambda terms up to free exchange of adjacent lambda abstractions coincides with the sequence counting isomorphism classes of rooted maps on oriented surfaces (A000698).
\end{abstract}

\section{Introduction}
\label{sec:intro}

Recent studies of the combinatorics of linear lambda calculus have revealed some surprising connections to the theory of \emph{graphs on surfaces} \cite{landozvonkin}.
In \cite{bodini-et-al}, Bodini, Gardy, and Jacquot gave a size-preserving bijection between $\alpha$-equivalence classes of closed linear lambda terms and a certain family of embedded graphs equivalent to \emph{rooted trivalent maps}.
Recall that a \emph{trivalent map} is a cellular embedding of an undirected 3-regular graph (possibly containing loops and multiple edges) into a compact oriented surface without boundary (or ``oriented surface'' for short), and that a \emph{rooting} of a map is the choice of an edge equipped with a direction.
Equivalently, a rooted trivalent map may be defined in purely algebraic terms, as a pair of fixed-point-free permutations $(\sigma,\alpha)$ acting transitively on a set $D$ and such that $\sigma^3 = \alpha^2 = 1$, together with a chosen element $r \in D$ (cf.~\cite{jones-singerman94schneps,vidal2010}).
Under this formulation, two rooted maps are isomorphic just in case one can be obtained from the other by conjugation along a root-preserving bijection.
In the context of lambda calculus, the adjective ``linear'' means that every variable in a term (whether free or introduced by a lambda abstraction) is used exactly once.
For example, the terms
$\lambda x.\lambda y.yx$, $\lambda x.x(\lambda y.y)$, and $\lambda x.\lambda y.xy$
are linear, while the terms
$\lambda x.xx$, $\lambda x.\lambda y.x$, and $\lambda x.\lambda y.y$
are non-linear.
The most basic relationship on lambda terms is \emph{$\alpha$-conversion}: essentially, two lambda terms are $\alpha$-equivalent if one can be obtained from the other by renaming of variables (e.g., $\lambda x.\lambda y.yx \equiv \lambda a.\lambda b.ba$, but $\lambda x.\lambda y.yx \not\equiv \lambda a.\lambda b.ab$).

After $\alpha$-equivalence, the next most fundamental relationship in lambda calculus is that induced by the rule of $\beta$-reduction $(\lambda x.t)(u) \to t[u/x]$.
A term is said to be \emph{$\beta$-normal} if it contains no subterms of the form $(\lambda x.t)(u)$.
In \cite{zg2015rpmnpt}, Alain Giorgetti and I showed that ($\alpha$-equivalence classes of) closed $\beta$-normal linear lambda terms satisfying an additional property of planarity are in size-preserving bijection with \emph{rooted planar maps}.
Recall that a planar map is a cellular embedding of an undirected graph (now with arbitrary vertex degrees) into the (oriented) sphere, and that it can be rooted by choosing an edge equipped with a direction.
Equivalently, a rooted map (on an oriented surface of arbitrary genus) may be represented by a pair of permutations $(\sigma,\alpha)$ acting transitively on a set $D$ and such that $\alpha$ is a fixed-point-free involution, together with a chosen element $r \in D$, again considered up to root-preserving conjugation.\footnote{Note that in general, the special interest of \emph{rooted} maps (as opposed to unrooted maps) from the standpoint of combinatorics is that they have no symmetries (i.e., their automorphism groups are trivial). For an interesting historical account of this motivation, see the beginning of Chapter 10 of Tutte's \emph{Graph Theory As I Have Known It} (Oxford, 1998).}
Then the map is planar just in case it satisfies Euler's formula $c(\sigma) - c(\alpha) + c(\phi) = 2$, where the ``faces'' permutation is defined by $\phi \defeq \alpha^{-1};\sigma^{-1}$ and $c(\pi)$ counts the number of cycles in $\pi$.

The connection between $\beta$-normal planar lambda terms and rooted planar maps was discovered without awareness of the work in \cite{bodini-et-al} and quite by accident, by querying the Online Encyclopedia of Integer Sequences.
Moreover, I believe it is accurate to say that the relationship between these two separate bijections linking lambda calculus and maps is for the moment not well-understood.
Still, the existence of a table of correspondences
\begin{center}
\begin{tabular}{|c||c|}
\hline
linear lambda terms & rooted trivalent maps on oriented surfaces \\
\hline
$\beta$-normal planar lambda terms & rooted maps on the sphere \\
\hline
\end{tabular}
\end{center}
makes it natural to ask what happens when one looks at $\beta$-normal linear lambda terms in general, without the condition of planarity.

In this note, I want to present a strong piece of combinatorial evidence that this table can indeed be continued in the way that one might hope for, provided that we look at lambda calculus through the right lens.
Specifically, I will prove via generating functions that the sequence counting closed $\beta$-normal linear lambda terms up to a natural notion of isomorphism (namely, \emph{free exchange of adjacent lambda abstractions}) coincides with the well-known sequence (A000698) counting rooted maps on oriented surfaces (as well as various other families of objects, as described in the OEIS entry).
Moreover, this correspondence is valid in two indices, in fact establishing that the cardinality of the set of isomorphism classes of \emph{neutral} linear lambda terms with a given size $n$ and given number of free variables $k$ equals the cardinality of the set of isomorphism classes of rooted maps with $n$ edges and $k$ vertices. 
Although I will not present an explicit bijection here, the fact that there is a simple numerical correspondence suggests that it could be fruitful to look for further relationships, and to try to understand their deeper causes.

\section{Enumerating $\beta$-normal linear lambda terms up to isomorphism}

\newcommand\GFlin{L}
\newcommand\GFpla{P}
\newcommand\GFplaB{\GFpla_B}
\newcommand\GFplaR{\GFpla_R}
\newcommand\GFlinB{\GFlin_B}
\newcommand\GFlinR{\GFlin_R}
\newcommand\GFlinqB{\tilde{\GFlin}_B}
\newcommand\GFlinqR{\tilde{\GFlin}_R}

Before considering the problem, let's recall how to enumerate linear lambda terms without regard to $\beta$-reduction.
We will always be considering terms modulo $\alpha$-equivalence, so from now we will just write ``term'' as a shorthand for ``$\alpha$-equivalence class of terms''.
Letting $t_{n,k}$ stand for the number of terms with $n$ total (free or bound) variable occurrences and $k$ free variables, we can verify that the generating function
$$\GFlin(z,x) = \sum_{n,k} t_{n,k} \frac{x^k z^n}{k!}$$
satisfies the following functional equation:
\begin{equation}
\GFlin(z,x) = zx + \GFlin(z,x)^2 + \frac{\partial}{\partial x}\GFlin(z,x)
\label{GFlin}
\end{equation}
Intuitively, this equation (which is essentially the one given in \cite{bodini-et-al}, up to reindexing) expresses the fact that any linear lambda term is either a variable, an application of one term to another term, or the lambda abstraction of a term in one of its free variables.
\emph{Closed} linear lambda terms are thus enumerated by the ordinary generating function $\GFlin(z,0)$.
Using this generating function one can for instance easily calculate the first values of $t_{n,0}$ (starting at $n = 1$),
$$ 1, 5, 60, 1105, 27120, 828250, \dots $$
and verify that these coincide with the first terms of series A062980 in the OEIS, counting rooted trivalent maps on oriented surfaces by number of edges (starting at $n = 0$, i.e., the index is shifted by one).

To count terms containing no subterms of the form $(\lambda x.t)(u)$, we can use the following standard characterization of ($\beta$-)normal terms in mutual induction with so-called \emph{neutral} terms:
\begin{enumerate}
\item Any variable is neutral.
\item If $t$ is neutral and $u$ is normal then the application $t(u)$ is neutral.
\item Every neutral term is also normal.
\item If $t$ is normal and $x$ is a free variable in $t$ then the abstraction $\lambda x.t$ is normal.
\end{enumerate}
As the notion of ``size'', it turns out to be natural (see \cite{zg2015rpmnpt}) to count the total number of times that rule 3 is invoked on subterms in the \emph{proof} that a term is neutral or normal---for a normal term this equals the total number of variable occurrences, but for a neutral term it is one less than that number.
Equation (\ref{GFlin}) thus splits into the following functional equations for the generating functions $\GFlinB(z,x)$ and $\GFlinR(z,x)$ counting neutral and normal linear terms by size and number of free variables (here $B$ and $R$ stand for ``blue'' and ``red'', following the color scheme blue = neutral, red = normal):
\begin{align}
\GFlinB(z,x) &= x + \GFlinB(z,x)\GFlinR(z,x) \label{GFlinB}\\
\GFlinR(z,x) &= z\GFlinB(z,x) + \frac{\partial}{\partial x}\GFlinR(z,x) \label{GFlinR}
\end{align}
Note that the planar case is obtained by replacing the derivative in (\ref{GFlinR}) by a ``discrete derivative'':
\begin{align}
\GFplaB(z,x) &= x + \GFplaB(z,x)\GFplaR(z,x) \label{GFplaB}\\
\GFplaR(z,x) &= z\GFplaB(z,x) + \frac{1}{x}(\GFplaR(z,x)-\GFplaR(z,0)) \label{GFplaR}
\end{align}
In particular, as established in \cite{zg2015rpmnpt}, the sequence enumerated by $\GFplaR(z,0)$
$$ 1, 2, 9, 54, 378, 2916,\dots $$ 
is OEIS series A000168, counting rooted planar maps by number of edges.
Unfortunately, the sequence enumerated by $\GFlinR(z,0)$, which begins
$$ 1, 3, 26, 367, 7142, 176766,\dots $$
does \emph{not} seem to match any known sequence related to maps.
In particular, it overshoots the sequence
$$ 1, 2, 10, 74, 706, 8162,\dots $$
counting rooted maps on oriented surfaces (A000698) by a large margin.
So, we need to try to understand why there seem to be ``too many'' $\beta$-normal linear lambda terms.

Well, if we spend some time looking more closely at, say, the first 30 such terms, 
\begin{multicols}{5}
\scriptsize
\begin{enumerate}
\item $\lambda x. x$
\item $\lambda x. x(\lambda y. y)$
\item $\lambda x. \lambda y. x(y)$
\item $\lambda x. \lambda y. y(x)$
\item $\lambda x. x(\lambda y. y(\lambda z. z))$
\item $\lambda x. x(\lambda y. \lambda z. y(z))$
\item $\lambda x. x(\lambda y. \lambda z. z(y))$
\item $\lambda x. x(\lambda y. y)(\lambda z. z)$
\item $\lambda x. \lambda y. x(y)(\lambda z. z)$
\item $\lambda x. \lambda y. y(x)(\lambda z. z)$
\item $\lambda x. \lambda y. x(y(\lambda z. z))$
\item $\lambda x. \lambda y. x(\lambda z. y(z))$
\item $\lambda x. \lambda y. x(\lambda z. z(y))$
\item $\lambda x. \lambda y. x(\lambda z. z)(y)$
\item $\lambda x. \lambda y. y(x(\lambda z. z))$
\item $\lambda x. \lambda y. y(\lambda z. x(z))$
\item $\lambda x. \lambda y. y(\lambda z. z(x))$
\item $\lambda x. \lambda y. y(\lambda z. z)(x)$
\item $\lambda x. \lambda y. \lambda z. x(y)(z)$
\item $\lambda x. \lambda y. \lambda z. y(x)(z)$
\item $\lambda x. \lambda y. \lambda z. x(z)(y)$
\item $\lambda x. \lambda y. \lambda z. z(x)(y)$
\item $\lambda x. \lambda y. \lambda z. x(y(z))$
\item $\lambda x. \lambda y. \lambda z. x(z(y))$
\item $\lambda x. \lambda y. \lambda z. y(z)(x)$
\item $\lambda x. \lambda y. \lambda z. z(y)(x)$
\item $\lambda x. \lambda y. \lambda z. y(x(z))$
\item $\lambda x. \lambda y. \lambda z. y(z(x))$
\item $\lambda x. \lambda y. \lambda z. z(x(y))$
\item $\lambda x. \lambda y. \lambda z. z(y(x))$
\end{enumerate}
\end{multicols}
\noindent
we will eventually realize that many of these terms do essentially the same thing.
For example, the terms $\lambda x. \lambda y. x(y)$ and $\lambda x. \lambda y. y(x)$ both represent a function which takes a pair of arguments and applies one to the other: the difference is just in whether the \emph{first} argument is applied to the \emph{second}, or vice versa.
This explains some of the motivation for the following definition:
\begin{definition}
Let $t$ be a linear lambda term.
We say that $t'$ is a \definand{local exchange} of $t$ if it is obtained (up to $\alpha$-equivalence) by replacing some subterm $\lambda x.\lambda y.u$ in $t$ by the term $\lambda y.\lambda x.u$.
We say that two terms $t_1$ and $t_2$ are \definand{isomorphic (up to free local exchange)} $t_1 \cong t_2$ if $t_2$ can be obtained from $t_1$ by a sequence of local exchanges.
\end{definition}
\begin{example}
$\lambda x. \lambda y. x(y) \cong \lambda x. \lambda y. y(x)$, since 
$\lambda y. \lambda x. x(y)$ is a local exchange of $\lambda x. \lambda y. x(y)$, and is $\alpha$-equivalent to $\lambda x. \lambda y. y(x)$.
Likewise, $\lambda x. \lambda y. y(\lambda z.\lambda w.z(w))(x) \cong \lambda x. \lambda y. x(\lambda z.\lambda w.w(z))(y)$.
On the other hand, observe that $\lambda x.\lambda y.y(\lambda z.x(z))$ cannot be obtained from $\lambda x.\lambda y.y(\lambda z.z(x))$ by local exchanges (that would rather require a ``non-local'' exchange of $\lambda x$ with $\lambda z$).
\end{example}
\noindent
Suppose we group the first 30 closed $\beta$-normal linear lambda terms into isomorphism classes:
\begin{multicols}{3}
\scriptsize
\begin{enumerate}
\item $\lambda x. x$
\item $\lambda x. x(\lambda y. y)$
\item $\lambda x. \lambda y. x(y)$ \\ $\lambda x. \lambda y. y(x)$
\item $\lambda x. x(\lambda y. y(\lambda z. z))$
\item $\lambda x. x(\lambda y. \lambda z. y(z))$ \\ $\lambda x. x(\lambda y. \lambda z. z(y))$
\item $\lambda x. x(\lambda y. y)(\lambda z. z)$
\item $\lambda x. \lambda y. x(y)(\lambda z. z)$ \\ $\lambda x. \lambda y. y(x)(\lambda z. z)$
\item $\lambda x. \lambda y. x(y(\lambda z. z))$ \\ $\lambda x. \lambda y. y(x(\lambda z. z))$
\item $\lambda x. \lambda y. x(\lambda z. y(z))$ \\ $\lambda x. \lambda y. y(\lambda z. x(z))$
\item $\lambda x. \lambda y. x(\lambda z. z(y))$ \\ $\lambda x. \lambda y. y(\lambda z. z(x))$
\item $\lambda x. \lambda y. x(\lambda z. z)(y)$ \\ $\lambda x. \lambda y. y(\lambda z. z)(x)$
\item $\lambda x. \lambda y. \lambda z. x(y)(z)$ \\ $\lambda x. \lambda y. \lambda z. y(x)(z)$ \\ $\lambda x. \lambda y. \lambda z. x(z)(y)$ \\ $\lambda x. \lambda y. \lambda z. z(x)(y)$ \\ $\lambda x. \lambda y. \lambda z. y(z)(x)$ \\ $\lambda x. \lambda y. \lambda z. z(y)(x)$
\item $\lambda x. \lambda y. \lambda z. x(y(z))$ \\ $\lambda x. \lambda y. \lambda z. x(z(y))$ \\ $\lambda x. \lambda y. \lambda z. y(x(z))$ \\ $\lambda x. \lambda y. \lambda z. y(z(x))$ \\ $\lambda x. \lambda y. \lambda z. z(x(y))$ \\ $\lambda x. \lambda y. \lambda z. z(y(x))$
\end{enumerate}
\end{multicols}
\noindent
Counting the number of isomorphism classes at a given size $n$, we can verify that this sequence matches the first few entries of A000698: one term for $n = 1$, two isomorphism classes for $n = 2$, and ten isomorphism classes for $n = 3$.
Indeed, we can prove that this coincidence continues indefinitely.
\begin{proposition}
The generating functions $\GFlinqB(z,x)$ and $\GFlinqR(z,x)$ counting isomorphism classes of neutral and normal linear lambda terms up to free local exchange satisfy the following functional equations:
\begin{align}
\GFlinqB(z,x) &= x + \GFlinqB(z,x)\GFlinqR(z,x) \label{GFlinqB} \\
\GFlinqR(z,x) &= z\sum_{i=0}^\infty \frac{1}{i!}\cdot \frac{\partial^i}{\partial x^i}\GFlinqB(z,x) \label{GFlinqR}
\end{align}
\end{proposition}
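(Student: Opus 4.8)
The plan is to derive both functional equations combinatorially, by analysing how the relation $\cong$ interacts with the inductive structure of neutral and normal terms, in parallel with the derivation of \eqref{GFlinB}--\eqref{GFlinR}. Throughout I would keep the exponential convention in $x$ inherited from \eqref{GFlin}: free variables stay labelled, since free local exchange only ever permutes \emph{bound} abstractions and never disturbs a free variable. The first fact to record is that a local exchange acts by swapping two \emph{adjacent} abstraction nodes $\lambda x.\lambda y.u \mapsto \lambda y.\lambda x.u$, and hence preserves the decomposition of the syntax tree into maximal blocks of consecutive abstractions: within a block of length $i$ the adjacent transpositions generate the full symmetric group $S_i$, while no abstraction can migrate across a non-abstraction node (an application, or the neutral-to-normal coercion of rule 3). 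Consequently I expect $\cong$ to be generated by two independent kinds of moves---reordering a maximal abstraction block, and exchanging strictly inside a subterm---and a first lemma will make this precise by showing these moves commute, so that the isomorphism class of a term is determined by the isomorphism classes of its immediate subterms together with the \emph{unordered} data of each abstraction block.

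For the neutral equation \eqref{GFlinqB} this already gives everything. A neutral term is either a variable (contributing $x$) or an application $t(u)$ of a neutral $t$ to a normal $u$; since the application node separates the two arguments, any local exchange in $t(u)$ lies entirely inside $t$ or entirely inside $u$, so an isomorphism class of neutral applications is exactly a pair (isomorphism class of neutral $t$, isomorphism class of normal $u$) with the labelled free variables partitioned between them by linearity. The exponential product then reproduces $\GFlinqB = x + \GFlinqB\,\GFlinqR$ verbatim, just as for \eqref{GFlinB}.

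The substance is the normal equation \eqref{GFlinqR}. Here I would first observe that $\sum_{i\ge 0}\tfrac{1}{i!}\partial_x^i = e^{\partial_x}$ is the shift operator, so \eqref{GFlinqR} is equivalent to the compact form $\GFlinqR(z,x) = z\,\GFlinqB(z,x+1)$. Combinatorially this reads: a normal term is the coercion (rule 3, accounting for the factor $z$) of a neutral term $N$ carrying a top block of abstractions; by the block lemma that block is unordered, and the substitution $x\mapsto x+1$ gives each free variable of $N$ an independent binary choice---remain free (factor $x$) or be absorbed into the block (factor $1$, contributing nothing further, so that the absorbed variables are recorded only as an unordered collection). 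Fixing a block of size $i$, the \emph{ordered} count of such configurations is $\partial_x^i\GFlinqB$, and passing to the unordered count is where the factor $\tfrac{1}{i!}$ must come from.

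Making that division exact is the step I expect to be the main obstacle. The abstraction map sends an isomorphism class of neutral $N$ equipped with an \emph{ordered} $i$-tuple of distinguished free variables (the species-derivative data underlying $\partial_x^i\GFlinqB$) to a normal isomorphism class, and reordering the tuple is itself a sequence of local exchanges, so the map is always at least $i!$-to-one. To get \emph{exactly} $i!$-to-one---so that $\tfrac{1}{i!}\partial_x^i\GFlinqB$ is the honest count rather than an undercount on orbits of size less than $i!$---I must show the $S_i$-action reordering the distinguished variables is \emph{free} on isomorphism classes. This is a rigidity property of linear terms: since every variable occurs exactly once, its occurrence sits at a determined leaf of the syntax tree, and local exchange permutes only abstraction nodes, never the leaves where variables are consumed. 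I would therefore prove by induction on term structure that no nontrivial relabelling of the free variables of a linear term is induced by any sequence of local exchanges (up to $\alpha$-equivalence); freeness of the $S_i$-action is then immediate, the $i!$ orderings of each block collapse to a single class with no correction term, and the exponential bookkeeping delivers $\tfrac{1}{i!}\partial_x^i\GFlinqB$ for each $i$. Summing over $i$ and reinstating the factor $z$ from rule 3 yields \eqref{GFlinqR}, and solving the two equations to low order reproduces $1,2,10,\dots$ as a consistency check.
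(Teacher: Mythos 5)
Your proposal is correct and follows essentially the same route as the paper: equation \eqref{GFlinqB} is obtained by observing that a local exchange in an application $t(u)$ must lie entirely inside $t$ or inside $u$, and equation \eqref{GFlinqR} by decomposing a normal term as a block of abstractions over a neutral term, with the factor $\frac{1}{i!}$ accounting for the orderings of the block. Where you go beyond the paper is worth recording. The paper's proof justifies the division by $i!$ with the single observation that all $i!$ orderings of the abstracted variables produce normal terms in the same isomorphism class; this shows the construction factors through unordered data, but not that the division is \emph{exact}. Exactness is equivalent to the rigidity property you isolate: the $S_i$-action permuting the $i$ distinguished free variables must act freely on isomorphism classes of neutral terms, i.e., $\sigma\cdot t \cong t$ forces $\sigma = \mathrm{id}$; if some class had a nontrivial stabilizer, its orbit would have size less than $i!$ and $\frac{1}{i!}\frac{\partial^i}{\partial x^i}\GFlinqB$ would undercount the orbits. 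Your sketched proof of this lemma is right and can be made very short: a local exchange $\lambda x.\lambda y.u \mapsto \lambda y.\lambda x.u$ leaves the underlying application/abstraction tree and every free-variable leaf untouched, so any chain of local exchanges preserves the name sitting at each free leaf; by linearity each designated variable occupies exactly one leaf, so a relabelling compatible with $\cong$ must fix every name. With that lemma in place your argument is complete, and it is in fact more careful than the paper's own proof, which leaves this freeness step implicit.
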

\begin{proof}
Equation (\ref{GFlinqB}) is analogous to (\ref{GFlinB}), while (\ref{GFlinqR}) is justified as follows.
Any normal term may be constructed by picking $i$ distinct variables $x_1,\dots,x_i$ which are free in some neutral term $t$ and lambda-abstracting them to produce $\lambda x_1.\dots\lambda x_i.t$.
This construction is accounted for by the $i$th partial derivative $\frac{\partial^i}{\partial x^i}\GFlinqB(z,x)$.
But since any way of ordering the variables yields a normal term in the same isomorphism class up to local exchanges, we must divide by a factor of $i!$, and since $i$ here is arbitrary we take a sum indexed over all non-negative integers.
Finally, since the size of the resulting normal term is (by definition) one plus the size of its neutral subterm $t$, we multiply by $z$.
\end{proof}
\begin{theorem}
The coefficient of $z^n x^k$ in $\GFlinqB(z,x)$ gives the number of isomorphism classes of rooted maps with $n$ edges and $k$ vertices on oriented surfaces of arbitrary genus.
\end{theorem}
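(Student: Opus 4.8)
The plan is to collapse the pair of functional equations (\ref{GFlinqB})--(\ref{GFlinqR}) into a single equation for $\GFlinqB$ alone, and then to identify that equation with the one that Arqu\`es and Beraud prove is satisfied by the generating function of rooted maps graded by edges and vertices. First I would eliminate $\GFlinqR$. Rearranging (\ref{GFlinqB}) gives $\GFlinqR(z,x) = (\GFlinqB(z,x) - x)/\GFlinqB(z,x)$. The key observation about (\ref{GFlinqR}) is that the operator $\sum_{i=0}^\infty \frac{1}{i!}\frac{\partial^i}{\partial x^i}$ is precisely the shift operator: applied to any polynomial in $x$ it sends $f(x)$ to $f(x+1)$, and since the coefficient of $z^n$ in $\GFlinqB$ is a polynomial in $x$, this is well-defined coefficient-wise. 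Hence (\ref{GFlinqR}) reads simply $\GFlinqR(z,x) = z\,\GFlinqB(z,x+1)$, and substituting into the rearranged (\ref{GFlinqB}) yields the shifted quadratic functional equation
\[
\GFlinqB(z,x) = x + z\,\GFlinqB(z,x)\,\GFlinqB(z,x+1),
\]
equivalently the Stieltjes-type continued fraction obtained by solving for $\GFlinqB(z,x)$ and iterating the shift, whose successive numerators are the linear forms $z(x+i)$.

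Next I would invoke the theorem of Arqu\`es and Beraud, which characterizes the two-variable generating function $M(z,x) \defeq \sum_{n,k} m_{n,k}\, z^n x^k$ --- where $m_{n,k}$ counts rooted maps with $n$ edges and $k$ vertices on oriented surfaces of arbitrary genus --- by an equivalent form of exactly this equation (respectively by the same continued fraction); here one uses that grading by vertices and grading by faces yield the same series by duality, so that any difference of convention on that point is immaterial. It then remains to observe that the shifted equation has a unique solution as a formal power series in $z$ with polynomial coefficients in $x$. Writing $\GFlinqB = \sum_n B_n(x)\, z^n$, the equation forces $B_0(x) = x$ and determines each subsequent coefficient by the recursion $B_n(x) = \sum_{j=0}^{n-1} B_j(x)\, B_{n-1-j}(x+1)$ for $n \ge 1$; since $M$ satisfies the identical recursion with the identical initial datum, we conclude $\GFlinqB(z,x) = M(z,x)$ coefficient by coefficient, which is the statement of the theorem.

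The main obstacle I anticipate lies not on the lambda-calculus side --- the reduction above is short once the shift-operator identity is spotted --- but in the careful alignment of bookkeeping conventions with the source. Specifically, I would need to confirm that the variable I identify with $x$ in the Arqu\`es--Beraud equation genuinely marks vertices (or, via duality, faces) rather than some other parameter, that their normalization of the single-vertex map agrees with $B_0(x) = x$, and that their edge grading carries no hidden offset. Once these points are pinned down, matching the two functional equations (or the two continued fractions) and appealing to uniqueness of the power-series solution completes the argument.
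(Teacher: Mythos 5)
Your proposal is correct and takes essentially the same approach as the paper: you recognize $\sum_{i=0}^\infty \frac{1}{i!}\frac{\partial^i}{\partial x^i}$ as the shift operator, derive the same equation $\GFlinqB(z,x) = x + z\,\GFlinqB(z,x)\,\GFlinqB(z,x+1)$, and appeal to the Arqu\`es--Beraud result (Corollary 2 of their paper) identifying its solution with the edge--vertex generating function of rooted maps on oriented surfaces. Your explicit uniqueness recursion $B_0(x)=x$, $B_n(x)=\sum_{j=0}^{n-1}B_j(x)\,B_{n-1-j}(x+1)$ merely makes precise what the paper leaves implicit in asserting that the maps generating function is ``precisely the solution'' of that equation.
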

\begin{proof}
The expression $\sum_{i=0}^\infty \frac{1}{i!}\cdot \frac{\partial^i}{\partial x^i}\GFlinqB(z,x)$ appearing in (\ref{GFlinqR}) is just the Taylor expansion of $\GFlinqB(z,x+1)$, so by substitution into (\ref{GFlinqB}) we have that
\begin{align}
\GFlinqB(z,x) = x + z\GFlinqB(z,x)\GFlinqB(z,x+1). \label{ArquesBeraud}
\end{align}
But Arqu\`es and Beraud have shown that the two variable generating function counting oriented rooted maps by edges and vertices is precisely the solution to equation (\ref{ArquesBeraud}),
as Corollary 2 in \cite{arques-beraud1}.
\end{proof}
\begin{corollary}
The number of isomorphism classes of rooted maps with $n$ edges is equal to the number of isomorphism classes of closed $\beta$-normal linear lambda terms with size $n+1$.
\end{corollary}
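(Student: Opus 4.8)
The plan is to read the Corollary off directly from the Theorem together with the functional equation~(\ref{GFlinqR}); no new combinatorial construction is needed, only a coefficient extraction.

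First I would record what the Theorem provides. Writing $\GFlinqB(z,x) = \sum_{n,k} m_{n,k}\,z^n x^k$, the coefficient $m_{n,k}$ is the number of isomorphism classes of rooted maps with $n$ edges and $k$ vertices. Since by the Theorem this is an \emph{ordinary} generating function in $x$, summing the map count over all vertex numbers (equivalently, over all genera) amounts simply to setting $x=1$, so that $[z^n]\GFlinqB(z,1) = \sum_k m_{n,k}$ is exactly the number of isomorphism classes of rooted maps with $n$ edges, which is the left-hand side of the Corollary.

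Second, I would identify the right-hand side with a coefficient of $\GFlinqR$. A closed $\beta$-normal term cannot be neutral at top level, since a neutral term always has a free head variable; hence every closed $\beta$-normal term is ``red'' (normal), and its having no free variables means it contributes to the part of $\GFlinqR(z,x)$ that is constant in $x$. Thus the isomorphism classes of closed $\beta$-normal linear terms of size $m$ are enumerated by $[z^m]\GFlinqR(z,0)$. The bridge between the two sides is then equation~(\ref{GFlinqR}): as already noted in the proof of the Theorem, the sum $\sum_{i\ge 0}\frac{1}{i!}\frac{\partial^i}{\partial x^i}\GFlinqB(z,x)$ is the Taylor expansion of $\GFlinqB(z,x+1)$, so that $\GFlinqR(z,x) = z\,\GFlinqB(z,x+1)$. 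Setting $x=0$ gives $\GFlinqR(z,0) = z\,\GFlinqB(z,1)$, and extracting coefficients yields $[z^{n+1}]\GFlinqR(z,0) = [z^n]\GFlinqB(z,1)$. Combining this with the first step identifies the number of closed $\beta$-normal linear terms of size $n+1$ with the number of rooted maps of $n$ edges, as claimed.

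I expect the only genuinely delicate point to be bookkeeping rather than mathematics: one must check that the size convention (a closed term of size $n+1$ versus a map with $n$ edges) is consistent with the single factor of $z$ picked up in passing from a neutral term to the normal term obtained by abstracting its free variables, and that evaluating at $x=1$ legitimately collapses the bivariate map count to the univariate one. Both are immediate once the generating functions are recognized as ordinary (rather than exponential) in $x$ --- which is precisely what the passage to isomorphism classes up to free local exchange buys us, by erasing the $k!$ orderings of the abstracted variables.
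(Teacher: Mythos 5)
Your proposal is correct and matches the paper's argument exactly: the paper's entire proof is the identity $\tilde{L}_R(z,0) = z\tilde{L}_B(z,1)$, obtained (as you do) by recognizing the sum in equation~(\ref{GFlinqR}) as the Taylor expansion of $\tilde{L}_B(z,x+1)$ and setting $x=0$, with the coefficient extraction and the observation that closed normal terms live in the $x$-constant part of $\tilde{L}_R$ left implicit. Your write-up simply makes explicit the bookkeeping (the index shift from the factor of $z$, and that setting $x=1$ in an ordinary generating function sums over the vertex count) that the paper compresses into one line.
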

\begin{proof}
Since $\GFlinqR(z,0) = z\GFlinqB(z,1)$.
\end{proof}
\medskip

\noindent
{\bf Acknowledgments.}
I am grateful to Maciej Do\l{}ega and Alain Giorgetti for many helpful conversations about rooted maps and their potential connections to linear lambda terms.

\bibliographystyle{abbrvnat}

\end{document}